\documentclass[5p,preprint,10pt]{elsarticle}

\usepackage{amsmath,amsxtra,amssymb,latexsym, amscd, amsthm}
\usepackage{graphicx,color}
\usepackage{setspace}


\journal{...}

\newtheorem{theorem}{Theorem}[section]
\newtheorem{lemma}[theorem]{Lemma}

\newtheorem{definition}[theorem]{Definition}
\newtheorem{corollary}[theorem]{Corollary}
\newtheorem{example}[theorem]{Example}

\newtheorem{remark}[theorem]{Remark}

\begin{document}

\begin{frontmatter}



\title{A piecewise ellipsoidal reachable set estimation method for continuous bimodal piecewise affine systems}

\author[1]{Le Quang Thuan}
\ead{lequangthuan@qnu.edu.vn}

\cortext[cor1]{Corresponding author}
\author[1]{Phan Thanh Nam}
\ead{phanthanhnam@qnu.edu.vn}
\author[2,3]{Simone Baldi \corref{cor1}}
\ead{S.Baldi@tudelft.nl}

\address[1]{Department of Mathematics and Statistics, Quy Nhon University, 170 An Duong Vuong, Quy Nhon, Binh Dinh, Vietnam}
\address[2]{School of Mathematics, Southeast University, Jiulonghu Campus, Library 514, Nanjijng 211189, China}
\address[3]{Delft Center for Systems and Control, Delft University of Technology, The Netherlands}

\begin{abstract} In this work, the issue of estimation of reachable sets in continuous bimodal piecewise affine systems \textcolor{black}{is studied}. A new method is proposed, in the framework of ellipsoidal bounding, using piecewise quadratic Lyapunov functions. \textcolor{black}{Although bimodal piecewise affine systems can be seen as a special class of affine hybrid systems, reachability methods developed for affine hybrid systems might be inappropriately complex for bimodal dynamics. This work goes in the direction of exploiting the dynamical structure of the system to propose a simpler approach. More specifically,} because of the piecewise nature of the Lyapunov function, we first derive conditions to ensure that a given quadratic function is positive on half spaces. Then, we exploit the property of bimodal piecewise quadratic functions being continuous on a given hyperplane. Finally, linear matrix characterizations of the estimate of the reachable set \textcolor{black}{are derived.} 
\end{abstract}

\begin{keyword} Reachable set estimation \sep Piecewise affine systems \sep Piecewise quadratic Lyapunov functions \sep Ellipsoidal bounding.

\end{keyword}
\end{frontmatter}

\doublespacing

\section{Introduction} \label{sec:1}

Piecewise affine (PWA) systems form a special class of nonlinear dynamical systems, where the dynamics are indeed described by piecewise affine functions. A PWA system can be imagined as a collection of affine dynamics together with a partition of  the state space into polyhedral regions. Each region is associated with one particular affine dynamical system from the collection \cite{rodrigues19}. As the system states evolve, the dynamics switches if the state vector crosses from one polyhedral region to another \textcolor{black}{(state-dependent switching)}. PWA systems can model physical systems appearing in engineering such as relay systems, hysteresis systems and systems with saturation phenomena \cite{Thuan14}. In addition, \textcolor{black}{PWA systems} can be used to approximate nonlinear systems. Therefore, considerable attention has been paid to PWA systems and  important achievements have been published for classic control properties that include non-Zenoness, observability \cite{camlibel:06a} and controllability \cite{Kanat08, Thuan14,YURTSEVEN13},  stability and stabilization \cite{EREN14,IERVOLINO17}, $\mathcal L_2$-gain stability \cite{Waitman19}, {\em etc}.  

The reachability issue is one of the most important problems in the analysis of PWA systems. \textcolor{black}{Reachability analysis is also a major concern in the context of state estimation, see \cite{Durieu01} and reference therein.} Unfortunately, calculation of reachability sets is known as an undecidable issue in general piecewise systems \cite{Blondel99,Sontag95}. For linear systems, some special families of linear vector fields are proved with decidable reachability problem; see {\em e.g.} \textcolor{black}{\cite{Gan17, Lafferriere01,LEGUERNIC10}} and references therein. Several approaches have been proposed for calculating reachability sets of PWA systems: an overview of such approaches is given hereafter. The first approach is the research line proposed by Habets, Collins and van Schupen \cite{1643366}, \textcolor{black}{with} later developments in the works of Broucke and co-authors \cite{Broucke14, 1582905}. In \cite{1643366}, the authors studied the reach-avoid issue of  PWA hybrid systems on simplices: this terminology means that each affine dynamic is defined on a simplex and \textcolor{black}{inputs are} constrained in a polytope. The problem is to find an admissible piecewise control law that guarantees that every closed-loop hybrid state trajectory starting in an initial set can reach a target location after a finite number of discrete transitions while avoiding unsafe locations. Recently, \cite{Xiang18} adopted a similar philosophy, by considering the initial state inside a polytope and by approximating the reachable set via a multilayer neural network  with Rectified Linear Unit. \textcolor{black}{A second approach to reachability analysis of PWA systems aim to approximate outer reachable sets by means of polytopes, zonotopes with constraints, template polyhedra, barrier certificates\textcolor{black}{: some methods in this family, like} \cite{Alt10,Colas} estimate reachable sets on a finite time interval, and computational complexity inevitably increases for  longer time intervals \cite{Asarin}}\textcolor{black}{: other methods of the family consider unbounded time reachability approximation via non-linear optimization, convex optimization or lattice based fixed point computation \cite{Thao17,Thao11,Prajna04}. Similarly to the first family, the methods in this family also consider inputs constrained in a polytope}. A \textcolor{black}{third} approach to study the reachability of PWA systems in the absence of inputs is based on the tools of impact maps, proposed in \cite{HAMADEH08}. This approach  provides algorithms to estimate reachable sets on the switching surfaces from one switching surface rather than in the state spaces. Once the algorithm ends, it generates a series of upper and lower bound ellipsoidal subsets of the switching surfaces indicating which states the trajectories of the system can reach from a given initial set. A \textcolor{black}{fourth} approach to reachability of PWA systems was studied in \cite{Thuan14,1657409, YURTSEVEN13}. In these papers, \textcolor{black}{reachable set estimates are not studied; rather,} necessary and sufficient conditions are established to guarantee global reachability of  PWA bimodal systems (PWA systems composed of two dynamics) in the sense that for any two states in the state space there exists an input that steers the trajectory from one state to the other in finite time. \textcolor{black}{PWA bimodal systems are also the object of our work. Although the class \textcolor{black}{of} PWA bimodal systems is a special class of PWA systems, it covers some important classes of systems such as linear relay systems. As a matter of fact, this class of systems has attracted a
lot of interest from many researchers \cite{Kanat08,EREN14,880612,YURTSEVEN13}. }

 As a \textcolor{black}{fifth} and final family of approaches, ellipsoidal techniques based on suitable Lyapunov functions have been used for estimating reachable sets in various classes of systems. By using the quadratic Lyapunov function method combined with $S$-procedure, a linear matrix inequality (LMI) condition for ellipsoidal bounding of the reachable set was derived in \textcolor{black}{\cite{Boyd94,Pierre}} for linear systems. This idea is so flexible that it can be extended to singular systems \cite{Feng15}; to switched linear systems \textcolor{black}{\cite{Garoche,Baldi18, Xiang}} via multiple Lyapunov function approach; to time-delay linear systems   via the Lyapunov–Razumikhin method \cite{FRIDMAN03}, or via the Lyapunov – Krasovskii type functional \cite{Kim08}, or via the Lyapunov-Krasovskii functional together with  delay partition method \cite{Nam11}. \textcolor{black}{One characteristic of this method, which makes it different from the methods in the previous families is to consider inputs inside ellipsoids, such as minimum energy inputs. }

The work in this paper belongs to the \textcolor{black}{{fifth}} family of approaches, focusing specifically on continuous bimodal piecewise affine systems with exogenous disturbances. \textcolor{black}{Although PWA bimodal systems can be seen as a special class of PWA hybrid systems, reachability methods developed for general PWA hybrid might be inappropriately complex for bimodal dynamics. This work goes in the direction of exploiting the dynamical structure of the system to propose a simpler approach. For example, as compared to \cite{Broucke14,1643366,1582905}, we can consider reachability over infinite time intervals, with our method eventually resulting in a convex optimization problem that can be efficiently solved.}  As compared to the impact map approach \cite{HAMADEH08}, the estimation of reachable sets takes place in the state space rather than on the switching surface; we are not studying state estimation as in \cite{Durieu01}; similarly to \cite{Thuan14,YURTSEVEN13} we address the same class of systems (bimodal dynamics), but we are interested in a reachable set estimates rather than in conditions for reachability. For this class of systems, we derive new LMI characterizations for ellipsoidal bounding using a newly proposed piecewise approach. In order to develop this approach, we first need to derive conditions to ensure that a given quadratic function is non-negative on half spaces. Then, we exploit the property of bimodal piecewise quadratic functions that is continuous on a given hyperplane. Finally, we derive LMI conditions leading to a new (piecewise ellipsoidal) estimate of the reachable set. 

To conclude the overview, it is worth mentioning that LMI-based conditions for piecewise ellipsoidal estimation of reachable set  have appeared in literature \cite{8407489,NAKADA2004383,5159817}: although these methods can potentially address general (multimodal) PWA systems, they leave some open questions with respect to dealing with \textcolor{black}{state-dependent switching, possibly non-vanishing disturbances}, sliding-modes or with respect to making sure that the reachable set is connected. In particular, \textcolor{black}{\cite{NAKADA2004383} requires the disturbances to vanish at infinity and not to cause jumps of the state (in our method disturbances are bounded and possibly not vanishing); \cite{8407489} considers time-dependent switching (in place of state-dependent one); the backward reachable setting in \cite{5159817} is quite different from standard reachability analysis. In general, } proposed methods \textcolor{black}{based on piecewise quadratic Lyapunov functions} require special decompositions for the Lyapunov matrices over the different partitions, where some common matrices are assumed to exist: unfortunately, no general method is available for deriving such common matrices \textcolor{black}{(see discussion in Remark~\ref{rmk3.11})}. The proposed method does not require any special decomposition for the Lyapunov matrices composing the piecewise ellipsoid. In this sense, the proposed conditions (Lemmas 3.1, 3.2, 3.4 and Theorem 3.5) are novel and \textcolor{black}{are constructed in such a way to exploit the bimodal dynamics}. However, the proposed results are valid for the class of continuous bimodal piecewise affine systems: an extension to general multimodal PWA systems is not trivial and deserves future investigation. 


The paper is organized as follows. Section~\ref{sec:2} introduces bimodal piecewise affine  systems and some preliminaries. In Section \ref{sec:3}, we present the main results of the paper. The proposed approach is validated by a numerical example provided in Section~\ref{sec:4}. Finally, the paper ends with the conclusions in Section~\ref{sec:5}.

\noindent{\itshape{Notation}}: In this paper, we denote with $\mathbb R$ the set of all real numbers, $\mathbb R_+$ the set of all non-negative real numbers, and $\mathbb R^{n}_+$ the set of all $n$-tuple non-negative real numbers. The notation $\mathbb R^{n\times m}$ denotes the set of all real $n\times m$ matrices and the transpose of a real matrix $M\in \mathbb R^{n\times m}$ is denoted by $M^T$. The notation $\mathrm{He}(M)$ stands for the matrix $M + M^T$. We use the symbol $\ast$ in symmetric matrices to denote entries that follow from symmetry.  For a symmetric matrix $Q\in \mathbb R^{n\times n}$ and a linear subspace $\mathcal W$ of $\mathbb R^n$, we write $Q \overset{\mathcal W}{>} 0$ meaning that $x^T Q x >0$ for all nonzero $x \in \mathcal W$.  For a set $G$, $\mathrm{cl}(G)$ stands for its closure. The notation $\mathcal L_{1,\mathrm{loc}}(\mathbb R_+, \mathcal W)$ denotes the Lebesgue space of locally integrable functions from $\mathbb R_+$ to $\mathcal W$. For a nonempty subset $X$ of $\mathbb R^n$, its dual cone is denoted by $X^*$ and defined as $X^* =\{ z\in \mathbb R^n \ | \   z^Tx  \geqslant 0 \}.$  
\section{Bimodal piecewise affine systems} \label{sec:2}  

In this section, the class of bimodal piecewise affine systems is introduced 
\begin{equation}\label{BH2.1}
 	\dot x(t) =\begin{cases}
 	A_{1 } x(t) + B_1 w(t) +d_1   \text{ if } c^Tx(t) +f < 0 \\ A_{2} x(t) + B_2 w(t) +d_2  \text{ if } c^Tx(t) +f \geqslant 0
 	\end{cases}
\end{equation}
where  $x(t)\in \mathbb R^{n}$ is the state, $w(t) \in \mathbb R^m$ is the exogenous disturbance, $A_1, A_2 \in \mathbb R^{n \times n} $ and $B_1, B_2 \in \mathbb R^{n\times m}$ denote the state and input matrices for both modes, $d_1, d_2 \in \mathbb R^n $ and $c \in \mathbb R^n, f \in \mathbb R$ characterize the switching surface between the two modes. Assume that the disturbances $w(t)$ are taking their values in the following ellipsoid constraint set 
\begin{equation}\label{con:3}
\mathcal W :=\{w\in \mathbb R^m \ | \ w^T R_w w \leqslant 1 \}
\end{equation} 
where $R_w \in \mathbb R^{m\times m}$ is a given symmetric positive definite matrix.

In this paper, solution concept of system \eqref{BH2.1} is understood in the Carathéodory sense as follows.  
\begin{definition} [\cite{Kanat08}]\rm  A locally absolutely continuous function $x:\mathbb R_+ \to \mathbb R^n$ is said to be a solution of system \eqref{BH2.1} for the initial state $x_0$ and locally integrable disturbance $w(t)$ if $x(0) =x_0$ and $x$ satisfies the system \eqref{BH2.1} almost everywhere $t\in \mathbb R_+$. 
\end{definition}
When the system \eqref{BH2.1} is continuous, {\em i.e.} the following property holds 
\begin{equation}\label{c-3} 
c^T x +f = 0 \implies A_1 x + B_1w +d_1= A_2 x + B_2w+d_2,
\end{equation}
the right-hand side of system \eqref{BH2.1} is globally Lipschitz continuous (see {\em e.g.} \cite[Prop. 4.2.2]{pang:02}). Moreover, by \cite{Thuan14}, there exists $h \in \mathbb R^n$ such that 
\begin{equation}\label{c-4}
A_1 - A_2 = h c^T, \ B_1 - B_2 =h  0=0, d_1 -d_2 = hf,
\end{equation}
and we denote $B := B_1 =B_2$. In such cases, existence and uniqueness of solutions are guaranteed by the theory of ordinary differential equations. 

Let us denote the unique solution of system \eqref{BH2.1} for the initial state $x_0$ and disturbance $w$ by $x^w(t;x_0)$. 
\begin{definition}\rm  The reachable set (from the origin) of system \eqref{BH2.1} is the set of all states that can be reached in finite-time by starting from zero-state for any possible disturbances taking values in $\mathcal W$, {\em i.e. } 
$$
\mathcal R := \{x^w(t;0) \ | \ t\in \mathbb R_+, w \in \mathcal L_{1,\mathrm{loc}}(\mathbb R_+, \mathcal W) \}.
$$ 
\end{definition}
In the sequel, we consider system \eqref{BH2.1} with continuous right-hand side, {\em  i.e.} property \eqref{c-3} or \eqref{c-4} is fulfilled. We focus on the estimation of reachable set $\mathcal R$ of system \eqref{BH2.1} for any possible input disturbances in $\mathcal W$. In order to have a well-posed problem with bounded estimate, the matrices $A_1, A_2$ are assumed to be Hurwitz.

The idea behind estimating the reachable set is to consider a Lyapunov function $V(x)$ and find a region  outside which the derivative of the Lyapunov function is negative definite for any possible disturbances $w \in \mathcal W$. To do so, we employ the Lyapunov characterization that is similar to \cite{Boyd94,Feng15, Nam11}. Its proof is omitted for compactness and can be found in \cite{Boyd94, Feng15, Nam11}.
\begin{lemma} \label{lm:mai} Let $\alpha$ be a given positive scalar.  If a Lyapunov function $V(x)$ for system \eqref{BH2.1} exists satisfying  
	\begin{equation} \label{ine-1}
	\dot V(x(t)) +\alpha V(x(t)) - \alpha w^T(t)R_w w(t)\leqslant 0 
	\end{equation} 
	almost everywhere in $ t \in \mathbb R_+$ and along any trajectory of system \eqref{BH2.1}, then the reachable estimation set is contained in the 1-level set of the Lyapunov function
	$$ 
	\Gamma := \left \{ x \in \mathbb R^n \ | \ V(x) \leqslant 1 \right \}.
	$$ 
\end{lemma}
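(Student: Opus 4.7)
The plan is to use the standard integration-factor trick. Starting from the dissipation inequality \eqref{ine-1}, I would multiply both sides by $e^{\alpha t}$ to rewrite the left-hand side as the total derivative
\[
\frac{d}{dt}\!\left(e^{\alpha t} V(x(t))\right) \;\leqslant\; \alpha\, e^{\alpha t}\, w^T(t) R_w w(t),
\]
valid for almost every $t\in\mathbb R_+$ along any trajectory. This identity is the heart of the argument because it converts the pointwise Lyapunov inequality into a statement that can be directly integrated in time.

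Next, I would integrate this inequality from $0$ to an arbitrary $t>0$. The initial state is $x(0)=0$, so $V(x(0))=0$ for any reasonable Lyapunov candidate (and this non-negativity/vanishing at the origin is implicitly part of what is meant by a Lyapunov function here). Since the disturbance satisfies $w(s)\in\mathcal W$, we have $w^T(s)R_w w(s)\leqslant 1$ for a.e.\ $s$, and hence
\[
e^{\alpha t} V(x(t)) \;\leqslant\; \alpha \int_0^t e^{\alpha s}\,ds \;=\; e^{\alpha t}-1.
\]
Dividing by $e^{\alpha t}$ gives $V(x(t)) \leqslant 1 - e^{-\alpha t} \leqslant 1$ for every $t\geqslant 0$, so $x^{w}(t;0)\in\Gamma$ for all admissible $w$ and all $t$. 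By the definition of $\mathcal R$, this yields $\mathcal R\subseteq\Gamma$.

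The only subtle point (and the main place where one has to be careful) is the justification of the chain rule / fundamental theorem of calculus when applying $\tfrac{d}{dt}$ to $V(x(t))$. In our setting $V$ will be piecewise quadratic and only continuous across the switching hyperplane $c^T x+f=0$, while $x(\cdot)$ is locally absolutely continuous thanks to the Lipschitz continuity of the right-hand side (ensured by \eqref{c-3}--\eqref{c-4}). Consequently $t\mapsto V(x(t))$ is locally absolutely continuous and differentiable at almost every $t$, with its weak derivative equal to $\dot V$ computed using whichever mode is active at that instant; the set of times spent on the hyperplane has measure zero (generically) or is handled by continuity of $V$ across it, so the integration step is unaffected. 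This is exactly why assumption \eqref{ine-1} is imposed \emph{almost everywhere}, and why the proof of this lemma is the same as in the purely quadratic setting of \cite{Boyd94,Feng15,Nam11}; for that reason, as the authors note, the detailed verification can be omitted.
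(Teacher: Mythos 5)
Your proof is correct and is precisely the standard integrating-factor argument that the paper itself omits and defers to \cite{Boyd94,Feng15,Nam11}: multiply \eqref{ine-1} by $e^{\alpha t}$, integrate, and use $w(s)\in\mathcal W$. The only hypothesis you need beyond \eqref{ine-1} is $V(0)=0$ (or at least $V(0)\leqslant 1$), which you correctly flag as implicit in the notion of Lyapunov function used here and which is indeed guaranteed for the piecewise quadratic $V$ constructed in Theorem~\ref{mthrm-2}.
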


\section{Technical lemmas and main results} \label{sec:3} 

In this section, we propose a new method to estimate reachable sets of \textcolor{black}{bimodal} PWA systems in the framework of ellipsoidal bounding using piecewise quadratic Lyapunov functions. Because of the different nature of the Lyapunov function, we first need to establish auxiliary lemmas.  In the rest of this paper, without loss of generality, we may assume that $c^T = \begin{bmatrix} c_1& c_2 & \cdots &c_n \end{bmatrix}$ with $c_1 \ne 0$. Then, we define the $n\times (n-1)$-matrix $\hat R$ and the vector $r_0\in \mathbb R^n$ as 
\begin{equation} \label{hnm} 
\hat R:= \begin{bmatrix} -\dfrac{c_2}{c_1} &-\dfrac{c_3}{c_1} & \cdots & -\dfrac{c_n}{c_1} \\  1  & 0 &\cdots & 0 \\  0 & 1 & \cdots &0 \\ \vdots & \vdots & \cdots &\vdots \\ 0 &0 &\cdots & 1 \end{bmatrix}, r_0 :=  \begin{bmatrix} -\dfrac{f}{c_1}\\ 0 \\ \vdots \\ 0 \\ 0\end{bmatrix}. 
\end{equation} 
Also, we define
$ \Sigma_{-}  = \{ x \ | \ c^T x+ f <0\},   \Sigma_{0}  = \{ x \ | \ c^T x+ f =0\}, $ and $ \Sigma_{+}  = \{ x \ | \ c^T x+ f >0\}.
$ 

The first lemma presents  conditions to ensure that a given quadratic function is non-negative on a half space, {\em i.e.} the conditions under which the following inequality  holds: 
\begin{equation} \label{eq:lma}
 x^TP x + 2 b^T x + e > 0, \forall x \in (\Sigma_0 \cup \Sigma_+) \backslash \{0\}.
\end{equation} 

\begin{lemma}\label{lm-3.3} Suppose that $P\in \mathbb R^{n\times n}$ is a symmetric positive definite matrix. Then, the inequality \eqref{eq:lma} holds if and only if  
$$ 
\begin{bmatrix} r_0 &  c&  \hat R & -\hat R \\ 1 & 0& 0 &0 \end{bmatrix}^T  \begin{bmatrix}
P & b \\ b^T & e
\end{bmatrix} \begin{bmatrix} r_0 &  c&  \hat R & -\hat R \\ 1 & 0& 0 &0 \end{bmatrix}   \overset{\mathbb R_+^{2n}}{>} 0
$$
where $\hat R$ and $r_0$ are defined in \eqref{hnm}. 
\end{lemma}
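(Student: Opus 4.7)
The plan is to reduce the quadratic inequality on the half-space $\Sigma_0\cup\Sigma_+$ to a copositive-type statement on the nonnegative orthant $\mathbb{R}_+^{2n}$ via an explicit parametrization. First I would record three geometric facts built into the construction of $\hat R$ and $r_0$: $c^T r_0=-f$, $c^T\hat R=0$, and $|c|^2>0$ (since $c_1\neq 0$). From these, $r_0\in\Sigma_0$ and the columns of $\hat R$ form a basis of $\ker c^T$, so every $x\in\mathbb{R}^n$ admits a unique decomposition $x=r_0+\lambda c+\hat R y$ with $\lambda\in\mathbb{R}$ and $y\in\mathbb{R}^{n-1}$. A one-line calculation gives $c^T x+f=\lambda\,|c|^2$, hence $x\in\Sigma_0\cup\Sigma_+$ iff $\lambda\geq 0$.

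Next, I would split $y=y_+-y_-$ with $y_\pm\in\mathbb{R}_+^{n-1}$ (componentwise positive/negative parts) and set $\xi=(1,\lambda,y_+^T,y_-^T)^T\in\mathbb{R}_+^{2n}$. Then
\[
\begin{pmatrix} x\\ 1\end{pmatrix}=M\,\xi,\qquad M=\begin{bmatrix} r_0 & c & \hat R & -\hat R\\ 1 & 0 & 0 & 0\end{bmatrix}.
\]
Writing the quadratic form compactly as $x^TPx+2b^Tx+e=(x,1)\,Q\,(x,1)^T$ with $Q=\begin{bmatrix}P&b\\b^T&e\end{bmatrix}$ yields the key identity $x^TPx+2b^Tx+e=\xi^T M^T Q M\,\xi$.

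The equivalence then follows by two short checks. For the forward direction, given $x\in(\Sigma_0\cup\Sigma_+)\setminus\{0\}$, I build $\xi$ as above; since $\xi_1=1$, $\xi$ is a nonzero element of $\mathbb{R}_+^{2n}$, and the cone inequality delivers the claim. For the reverse direction, pick any nonzero $\xi\in\mathbb{R}_+^{2n}$ and split by the sign of $\xi_1$. If $\xi_1>0$, the normalized point $\tilde x=\xi_1^{-1}(\xi_1 r_0+\xi_2 c+\hat R(y_+-y_-))$ lies in $\Sigma_0\cup\Sigma_+$ and $\xi^T M^T Q M\,\xi=\xi_1^{2}(\tilde x^T P\tilde x+2 b^T\tilde x+e)>0$. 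If $\xi_1=0$, then $M\xi=(v,0)^T$ with $v=\xi_2 c+\hat R(y_+-y_-)$, and $\xi^T M^T Q M\,\xi=v^T P v>0$ whenever $v\neq 0$ by positive definiteness of $P$.

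The main obstacle is the bookkeeping around the kernel of $M$ inside $\mathbb{R}_+^{2n}$: vectors of the form $\xi=(0,0,y_+,y_+)$ satisfy $M\xi=0$ and hence give $\xi^T M^T Q M\,\xi=0$, while the borderline case $\tilde x=0$ in the reverse direction is only reached by continuity. These spurious directions carry no information about the half-space, so the cone notation $\overset{\mathbb{R}_+^{2n}}{>}0$ must be interpreted as strict positivity on $\{\xi\in\mathbb{R}_+^{2n}: M\xi\neq 0\}$, i.e.\ on the set of $(x,1)$-pairs arising from the half-space. Modulo this natural extension of the notation from linear subspaces to cones, which is forced by the parametrization itself, the two conditions are equivalent.
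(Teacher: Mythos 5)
Your proof follows essentially the same route as the paper's: parametrize the half-space as $r_0+\mathrm{cone}(c)+\ker c^T$ with $\ker c^T=\mathrm{im}\,\hat R$ split into $\hat R\theta_1-\hat R\theta_2$, lift to homogeneous coordinates $(x,1)$, and transfer positivity of the quadratic form to positivity of the congruent matrix on the orthant $\mathbb R_+^{2n}$. Your extra bookkeeping about the kernel of the lifting matrix $M$ (vectors $(0,0,\theta,\theta)$, the $\xi_1=0$ rays handled by positive definiteness of $P$, and the consequent need to read $\overset{\mathbb R_+^{2n}}{>}0$ as strict positivity off $\ker M$) addresses exactly the degenerate directions that the paper's proof compresses into ``one can verify'', so your version is, if anything, the more careful one.
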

\begin{proof} \textcolor{black}{Note that one can express the set $\Sigma_0 \cup \Sigma_+$  as  } 
\begin{align*}
\Sigma_0 \cup \Sigma_+ & = \{x \in \mathbb R^n \ | \ c^T x +f \geqslant 0 \} 
\\& = \left \{x \in \mathbb R^n \ | \ \begin{bmatrix} c^T & f \end{bmatrix} \begin{bmatrix} x\\ 1\end{bmatrix} \geqslant 0 \right \}. 
\end{align*}
\textcolor{black}{Next, we claim that
\begin{equation} \label{mnb}
\Sigma_0 \cup \Sigma_+ = \left \{ r_0+ \mu c + \hat R \theta_1 - \hat R \theta_2 \ \vline \ \begin{matrix} \mu \in \mathbb R_+, \\  \theta_1, \theta_2 \in \mathbb R^{n-1}_+ \end{matrix}  \right \}.
\end{equation}
Indeed, since $c^T r_0 = -f$, we have  
\begin{align*}
\begin{bmatrix} c^T & f \end{bmatrix} \begin{bmatrix} x\\ 1\end{bmatrix} &= \begin{bmatrix} c^T & f \end{bmatrix} \begin{bmatrix} r_0\\ 1\end{bmatrix} + \begin{bmatrix} c^T & f \end{bmatrix} \begin{bmatrix} x-r_0\\ 0\end{bmatrix} \\& = c^T(x-r_0).
\end{align*}
Therefore, $\begin{bmatrix} c^T & f \end{bmatrix} \begin{bmatrix} x\\ 1\end{bmatrix} \geqslant 0$ if and only if $c^T(x-r_0) \geqslant 0$. The latter holds if and only if 
$$
x-r_0 \in \mathrm{cone}(c) + \ker c^T 
$$
where $\mathrm{cone}(c)$ stands for the cone generated by $c$. Finally, it can be seen that 
$$
\ker c^T = \mathrm {im} \hat R = \big \{\hat R \theta_1 - \hat R \theta_2 \ | \  \theta_1, \theta_2 \in \mathbb R^{n-1}_+ \big \}. 
$$
Thus, the claim \eqref{mnb} is proven. }

\textcolor{black}{Now, due to \eqref{mnb},} if $x\in \Sigma_0 \cup \Sigma_+$, then 
$x= r_0  +  \mu c + \hat R\theta_1 + (-\hat R)\theta_2$ for some $\mu \in \mathbb R_+$ and $\theta_1, \theta_2 \in \mathbb R^{n-1}_+$. Thus 
\begin{equation} \label{eq:2w}
\begin{bmatrix} x \\ 1 \end{bmatrix} = \begin{bmatrix} r_0\\ 1 \end{bmatrix} + \mu \begin{bmatrix} c \\ 0 \end{bmatrix} + \begin{bmatrix} \hat R \\ 0 \end{bmatrix} \theta_1 + \begin{bmatrix} -\hat R \\ 0 \end{bmatrix} \theta_2.
\end{equation} 
In views of equality \eqref{eq:2w}, one can verify that  
$$
x^TP x + 2 b^T x + e = \begin{bmatrix} x\\ 1 \end{bmatrix}^T \begin{bmatrix} P & b\\ b^T & e \end{bmatrix} \begin{bmatrix} x\\ 1 \end{bmatrix} >  0 
$$
for all $x \in (\Sigma_0 \cup \Sigma_+) \backslash \{0\} $ if and only if 
$$ 
\begin{bmatrix} r_0 &  c&  \hat R & -\hat R \\ 1 & 0& 0 &0 \end{bmatrix}^T  \begin{bmatrix}
P & b \\ b^T & e
\end{bmatrix} \begin{bmatrix} r_0 &  c&  \hat R & -\hat R \\ 1 & 0& 0 &0 \end{bmatrix}    \overset{\mathbb R_+^{2n}}{>} 0. 
$$
The proof is complete. 
\end{proof}
The second lemma aims at providing conditions to make sure that a piecewise quadratic Lyapunov function is continuous on switching hyperplane $\Sigma_0$. 
\begin{lemma}\label{lm-3.4} The equality 
	$$x^TP_1 x + 2 b_1^T x + e_1  = x^TP_2 x + 2 b_2^T x +e_2, \forall x \in \Sigma_0
	$$
	holds if  
	$$
	\begin{bmatrix} r_0 &   \hat R & -\hat R \\ 1 & 0 &0 \end{bmatrix}^T \begin{bmatrix} P_1 -P_2 & b_1-b_2 \\ (b_1-b_2)^T & e_1-e_2 \end{bmatrix} \begin{bmatrix} r_0 &   \hat R & -\hat R \\ 1 & 0 &0 \end{bmatrix}  = 0. 
	$$
\end{lemma}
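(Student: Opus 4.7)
The plan is to mirror the parametrization argument of Lemma~\ref{lm-3.3}, but restricted to the hyperplane $\Sigma_0$ rather than the closed half-space $\Sigma_0 \cup \Sigma_+$. Since $r_0$ satisfies $c^T r_0 = -f$, the identity
$$c^T x + f = c^T(x - r_0)$$
shows that $x \in \Sigma_0$ if and only if $x - r_0 \in \ker c^T$. Using $\ker c^T = \mathrm{im}\,\hat R$ (verified in the proof of Lemma~\ref{lm-3.3}) and the standard decomposition of $\mathbb{R}^{n-1}$ as $\mathbb{R}^{n-1}_+ - \mathbb{R}^{n-1}_+$, I would conclude
$$\Sigma_0 = \bigl\{\, r_0 + \hat R \theta_1 - \hat R \theta_2 \ \big| \ \theta_1,\theta_2 \in \mathbb{R}^{n-1}_+ \,\bigr\}.$$

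Next, I would lift to the homogeneous form by writing, for any $x \in \Sigma_0$,
$$\begin{bmatrix} x \\ 1 \end{bmatrix} = \begin{bmatrix} r_0 & \hat R & -\hat R \\ 1 & 0 & 0 \end{bmatrix}\begin{bmatrix} 1 \\ \theta_1 \\ \theta_2 \end{bmatrix},$$
and then observe that the difference of the two quadratic functions can be expressed as
$$x^T(P_1-P_2)x + 2(b_1-b_2)^T x + (e_1-e_2) = \begin{bmatrix} x \\ 1 \end{bmatrix}^T \begin{bmatrix} P_1-P_2 & b_1-b_2 \\ (b_1-b_2)^T & e_1-e_2 \end{bmatrix} \begin{bmatrix} x \\ 1 \end{bmatrix}.$$
Substituting the lifted expression shows that the equality on $\Sigma_0$ is equivalent to a quadratic form in $(1,\theta_1,\theta_2)$ vanishing for all $\theta_1,\theta_2 \in \mathbb{R}^{n-1}_+$, with matrix exactly the one appearing in the lemma statement.

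Finally, the sufficiency claim is immediate: if that matrix is the zero matrix, the quadratic form is identically zero, hence the two quadratic functions agree on all of $\Sigma_0$. I do not anticipate a real obstacle here; the only subtle point is justifying the decomposition $\ker c^T = \hat R \mathbb{R}^{n-1}_+ - \hat R \mathbb{R}^{n-1}_+$, but this is already used (and essentially established) in Lemma~\ref{lm-3.3}, so it can be invoked without repetition. Note that the converse direction would also hold, since a polynomial vanishing on the full orthant $\mathbb{R}^{2n-2}_+$ (which has non-empty interior) must be identically zero, but the lemma only asserts the sufficient implication, so this stronger remark is not needed.
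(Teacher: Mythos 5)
Your proposal is correct and follows essentially the same route as the paper: parametrize $\Sigma_0$ as $r_0 + \hat R\,\mathbb{R}^{n-1}_+ - \hat R\,\mathbb{R}^{n-1}_+$ using $c^T r_0 = -f$ and $\ker c^T = \mathrm{im}\,\hat R$, lift to the homogeneous coordinates $\begin{bmatrix} x^T & 1\end{bmatrix}^T$, and observe that the vanishing of the lifted matrix forces the difference of the two quadratics to vanish on $\Sigma_0$. Your closing remark that the converse also holds is accurate but, as you note, not required by the statement.
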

\begin{proof} One has  
\begin{align*}
\Sigma_0 & = \left \{x\in \mathbb R^n \ | \ c^T x +f =0\right \} \\& = \left \{ x\in \mathbb R^n \ | \ c^T x +f \geqslant 0 \right \}  \cap  \left \{ x\in \mathbb R^n \ | \ c^T x +f \leqslant 0 \right \} \\& = r_0 + \{\hat R \theta_1 + (-\hat R) \theta_2 \ | \  \theta_1, \theta_2 \in \mathbb R^{n-1}_+ \}.  
\end{align*}
If $x\in \Sigma_0$, then $ x= r_0 + \hat R\theta_1 - \hat R\theta_2$ for some $\theta_1, \theta_2 \in \mathbb R^{n-1}_+$. Thus, we have 
\begin{equation} \label{eq:2a}
\begin{bmatrix} x \\ 1 \end{bmatrix} = \begin{bmatrix} r_0 \\ 1 \end{bmatrix} + \begin{bmatrix} \hat R \\ 0 \end{bmatrix} \theta_1 - \begin{bmatrix} \hat R \\ 0 \end{bmatrix} \theta_2.
\end{equation} 
In views of equality \eqref{eq:2a}, one can see that
$$
\begin{bmatrix} x\\ 1 \end{bmatrix}^T\begin{bmatrix}  P_1-P_2 & b_1-b_2 \\ (b_1-b_2)^T & e_1 -e_2 \end{bmatrix}  \begin{bmatrix} x\\ 1 \end{bmatrix} =0, \forall x \in \Sigma_0
$$
if  
$$
\begin{bmatrix} r_0 &   \hat R & -\hat R \\ 1 & 0 &0 \end{bmatrix}^T \begin{bmatrix} P_1 -P_2 & b_1-b_2 \\ (b_1-b_2)^T & e_1-e_2 \end{bmatrix} \begin{bmatrix} r_0 &   \hat R & -\hat R \\ 1 & 0 &0 \end{bmatrix}  = 0. 
$$
The proof is complete. 
\end{proof}
Consider the set 
\begin{align*} 
\mathcal X & = \left \{\begin{bmatrix} x\\ w \end{bmatrix} \in \mathbb R^{n+m} \ \vline  \ \begin{matrix} c^Tx +f \geqslant 0,\\ 1 -w^T R_w w \geqslant 0 \end{matrix} \right  \}.
\end{align*} 
Let $M \in \mathbb R^{(n+m)\times (n+m)}$, $q \in \mathbb R^{n+m}$ and $p \in \mathbb R$. The last lemma provides a sufficient condition to guarantee that the following statement holds:
\begin{equation} \label{eq:st}
\begin{bmatrix} x\\ w \end{bmatrix}^T  M \begin{bmatrix} x\\ w \end{bmatrix} + 2 q^T \begin{bmatrix} x\\ w \end{bmatrix} + p \geqslant 0, \forall \begin{bmatrix} x\\ w \end{bmatrix} \in \mathcal X. 
\end{equation} 
Before presenting the last lemma, we need to introduce the concept of homogenization for a set.  
\begin{definition}[\cite{Sturm2003}] \label{def3.3}  For a non-empty set \textcolor{black}{ $D \subseteq \mathbb R^n$}, its homogenization is defined by 
$$
\mathcal H(D) =\mathrm{cl}\left \{ \begin{bmatrix} x\\ t \end{bmatrix} \in \mathbb R^{n+1} \ \vline \ x \in \mathbb R^n, t>0, x/t \in D \right  \}.
$$
\end{definition}
\textcolor{black}{
\begin{example} Let $D = [0,1] \subseteq \mathbb R$. Then, its homogenization is given by (see Figure 1)
$$
\mathcal H(D) = \left \{ \begin{bmatrix} x\\ t \end{bmatrix} \in \mathbb R^{2} \ \Big | \ t\geqslant 0, 0 \leqslant x \leqslant t \right \}.
$$	
\begin{figure}
\centering
	\includegraphics[scale=0.55]{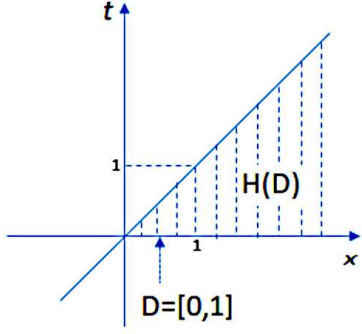}
	\caption{\textcolor{black}{The homogenization of D= [0,1]}}\label{fi1}
\end{figure}
\end{example}
}
\begin{lemma}\label{lm-3.6} The statement \eqref{eq:st} holds if and only if 
	$$
	\begin{bmatrix} M& q \\ q^T &p \end{bmatrix} \in \mathrm{cl} \left \{ Z    \ \vline  \ \begin{matrix} Z - \begin{bmatrix} 0 & 0 & c\sigma \\ 0& -\gamma R_w & 0  \\ \sigma c^T & 0& \gamma   \end{bmatrix}   \geqslant 0  \\ \text{ for some }  \gamma \geqslant 0, \sigma \geqslant 0 \end{matrix}  \right \}. 
	$$
\end{lemma}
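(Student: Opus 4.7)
The plan is to homogenize the non-homogeneous inequality in \eqref{eq:st} and then apply a generalized $S$-procedure of Sturm~\cite{Sturm2003} type to the resulting conic problem. Setting $Z = \begin{bmatrix} M & q \\ q^T & p \end{bmatrix}$ and $\tilde v = (x^T, w^T, 1)^T$, the statement \eqref{eq:st} reads $\tilde v^T Z \tilde v \geqslant 0$ for all $(x, w) \in \mathcal{X}$. The first step is to observe, by a standard homogenization argument, that this is equivalent to $\bar v^T Z \bar v \geqslant 0$ for every $\bar v = (x^T, w^T, t)^T \in \mathcal{H}(\mathcal{X})$: the forward direction uses $t = 1$, while the reverse uses the scaling $(x/t, w/t) \in \mathcal{X}$ for $t > 0$ and passes to the closure at $t = 0$.

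The second step is to compute $\mathcal{H}(\mathcal{X})$ explicitly from Definition~\ref{def3.3}. Clearing denominators in the two defining inequalities of $\mathcal{X}$ and adding the $t = 0$ face by closure yields
\[
\mathcal{H}(\mathcal{X}) = \Bigl\{ (x^T, w^T, t)^T \ \Big|\ t \geqslant 0,\ c^T x + f t \geqslant 0,\ t^2 - w^T R_w w \geqslant 0 \Bigr\},
\]
where the collapse at $t = 0$ forces $w = 0$ since $R_w \succ 0$.

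The central step is to invoke Sturm's representation theorem~\cite{Sturm2003} on the closed convex cone $\mathcal{H}(\mathcal{X})$. That result characterizes the cone of symmetric matrices $Z$ with $\bar v^T Z \bar v \geqslant 0$ on $\mathcal{H}(\mathcal{X})$ as the closure of the matrices that can be written as a positive semidefinite matrix $P$ plus a nonnegative combination of the quadratic-form matrices naturally associated with the defining inequalities of $\mathcal{H}(\mathcal{X})$. Introducing a multiplier $\sigma \geqslant 0$ for the linear constraint $c^T x + f t \geqslant 0$ and a multiplier $\gamma \geqslant 0$ for the homogeneous quadratic constraint $t^2 - w^T R_w w \geqslant 0$, and then collecting the resulting terms block by block in the $(x, w, t)$ partition, produces precisely the matrix displayed in the lemma; writing $Z$ minus this matrix as $P \succeq 0$ and taking the closure then yields the stated set description.

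The sufficiency direction is straightforward: if $Z$ is the limit of matrices of the form $P_n + N(\sigma_n,\gamma_n)$ with $P_n \succeq 0$, then substituting $\bar v \in \mathcal{H}(\mathcal{X})$ makes $\bar v^T N(\sigma_n,\gamma_n) \bar v$ a nonnegative combination of quantities that are nonnegative on $\mathcal{H}(\mathcal{X})$, and one passes to the limit. The main obstacle I expect is the necessity direction: the cone $\mathcal{H}(\mathcal{X})$ generally fails any Slater-type interior condition (precisely because its $t = 0$ slice degenerates to $\{(x, 0, 0) : c^T x \geqslant 0\}$), so the representation is not tight unless the closure is taken, and verifying the hypotheses of Sturm's theorem carefully for this particular mixed linear-and-quadratic cone is the delicate technical point.
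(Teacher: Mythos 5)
Your proposal follows essentially the same route as the paper: homogenize and invoke the Sturm--Zhang generalized $S$-procedure on $\mathcal X$, arriving at the same multiplier structure and the same observation that the closure (rather than a Slater-type exact representation) is what makes the necessity direction work. The paper dispatches the point you flag as delicate by applying the specific result \cite[Theorem 3]{Sturm2003} to $\mathcal X = \mathcal D \cap \mathcal L$ (a single quadratic constraint intersected with a half-space), the only extra computation being the dual cone $\mathcal H(\mathcal D)^* = \left\{ (0,0,\sigma)^T \ | \ \sigma \geqslant 0 \right\}$, which turns your generic ``multiplier for the linear constraint'' into the precise rank-two term $\sigma(u e^T + e u^T)$ with $u = (c^T,0,f)^T$ and $e=(0,0,1)^T$, and which produces bottom-right entry $\gamma + 2f\sigma$ rather than the displayed $\gamma$ --- the lemma statement has a typo there, as both the paper's own proof and conditions \eqref{eq:18}--\eqref{eq:19} of Theorem~\ref{mthrm-2} carry the $2f\sigma$ term that your derivation also yields.
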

\begin{proof} Define
\begin{align*} 
\mathcal D & = \left \{\begin{bmatrix} x\\ w\end{bmatrix}  \in \mathbb R^{n+m} \ | \ 1-w^T R_w w \geqslant 0 \right \} \\
& = \left \{\begin{bmatrix} x\\ w\end{bmatrix}  \in \mathbb R^{n+m} \ \vline \ 1 + \begin{bmatrix} x\\ w\end{bmatrix}^T \begin{bmatrix} 0&0 \\ 0 &-R_w  \end{bmatrix}  \begin{bmatrix} x\\ w\end{bmatrix} \geqslant 0 \right \} 
\\
& = \left \{\begin{bmatrix} x\\ w\end{bmatrix}  \in \mathbb R^{n+m} \ \vline \ \begin{bmatrix} x\\ w \\1 \end{bmatrix}^T \begin{bmatrix} 0&0 & 0\\ 0 &-R_w &0 \\ 0 &0 & 1 \end{bmatrix}  \begin{bmatrix} x\\ w \\ 1\end{bmatrix} \geqslant 0 \right \} 
\end{align*} 
and 
\begin{align*}
\mathcal L &= \left \{\begin{bmatrix} x\\ w \end{bmatrix} \in \mathbb R^{n+m} \ \vline \ c^Tx +f \geqslant 0 \right  \} 
\\& = \left \{\begin{bmatrix} x\\ w \end{bmatrix} \in \mathbb R^{n+m} \ \vline \ \begin{bmatrix} c^T & 0 & f \end{bmatrix} \begin{bmatrix} x\\ w \\ 1 \end{bmatrix} \geqslant 0 \right  \}  
\\& = \left \{\begin{bmatrix} x\\ w \end{bmatrix} \in \mathbb R^{n+m} \ \vline \ \begin{bmatrix} x^T & w^T &1 \end{bmatrix} \begin{bmatrix}   c \\ 0 \\ f \end{bmatrix} \geqslant 0 \right  \}.
\end{align*}
Then, $\mathcal X = \mathcal D \cap \mathcal L$. \textcolor{black}{Moreover, employing Definition~\ref{def3.3}, the homogenization of 
$$\mathcal D = \left \{\begin{bmatrix} x\\ w\end{bmatrix}  \in \mathbb R^{n+m} \ \vline \ 1-w^T R_w w \geqslant 0 \right \} $$ is 
\begin{align*}
\mathcal H(\mathcal D) & =\mathrm{cl}\left \{\begin{bmatrix} x\\ w \\ t \end{bmatrix}  \in \mathbb R^{n+m+1} \ \vline \ \begin{matrix} t>0 \\ \begin{bmatrix} x/t \\ w/t \end{bmatrix}  \in \mathcal D \end{matrix}  \right \} 
\\ & =\mathrm{cl}\left \{\begin{bmatrix} x\\ w \\ t \end{bmatrix}  \in \mathbb R^{n+m+1} \ \vline \ \begin{matrix} t>0 \\ 1-  (w/t)^T R_w (w/t) \geqslant 0 \end{matrix}  \right \} 
 \\ &
=\mathrm{cl}\left \{\begin{bmatrix} x\\ w \\ t \end{bmatrix}  \in \mathbb R^{n+m+1} \ \vline \ t^2  -w^T R_w w \geqslant 0, t > 0 \right \}.
\end{align*}}
It can be verified that the dual cone of  $\mathcal H(\mathcal D)$ is 
$$
\mathcal H(\mathcal D)^* = \left \{\begin{bmatrix} 0\\ 0 \\ \sigma \end{bmatrix}  \in \mathbb R^{n+m+1} \ \vline \ \sigma \in \mathbb R_+ \right \}.
$$
For any $\gamma, \sigma \in \mathbb R_+$, we have  
\begin{align*}\gamma \begin{bmatrix} 0&0 & 0\\ 0 &-R_w &0 \\ 0 &0 & 1 \end{bmatrix}  + \begin{bmatrix} c\\ 0 \\ f \end{bmatrix} \begin{bmatrix} 0&0 & \sigma \end{bmatrix} +  \begin{bmatrix} 0 \\ 0 \\ \sigma\end{bmatrix} \begin{bmatrix} c^T& 0&f\end{bmatrix} \\
= \begin{bmatrix} 0 & 0 & c\sigma \\ 0& -\gamma R_w & 0  \\ \sigma c^T & 0& \gamma +2f\sigma    \end{bmatrix}.
\end{align*}
By \cite[Theorem 3]{Sturm2003}, the statement \eqref{eq:st} holds if and only if 
$$\begin{bmatrix} M& q \\ q^T &p \end{bmatrix} \in \mathrm{cl} \left \{ Z    \ \vline  \ \begin{matrix} Z - \begin{bmatrix} 0 & 0 & c\sigma \\ 0& -\gamma R_w & 0  \\ \sigma c^T & 0& \gamma +2f\sigma   \end{bmatrix}   \geqslant 0  \\ \text{ for some }  \gamma, \sigma \in \mathbb R_+ \end{matrix}  \right \}.
$$
The proof is complete. 
\end{proof}
To continue, let us denote
\begin{subequations}\label{eq:lmk} 
	\begin{align} 
\tilde e_1 & = \begin{cases}
0 &, \text{ if } 0  \in \Sigma_0 \cup \Sigma_{-}  \\ e_1 \in \mathbb R &, \text{ if } 0  \in  \Sigma_+ ,
\end{cases}
\\ \tilde e_2 & = 
\begin{cases}
e_2 \in \mathbb R  &, \text{ if } 0   \in \Sigma_{-}  \\ 0 &, \text{ if } 0 \in \Sigma_0 \cup \Sigma_+.
\end{cases}
\end{align} 
\end{subequations} 
We are now ready to state and prove the main result of this paper.
 \begin{theorem}\label{mthrm-2} Consider the bimodal PWA system \eqref{BH2.1} subject to condition \eqref{con:3}.  Suppose that there exist scalars $\alpha > 0 $ and $\tilde e_1, \tilde  e_2  $ as in \eqref{eq:lmk}, positive definite symmetric matrices $P_1, P_2 \in \mathbb R^{n\times n}$ and vectors $b_1, b_2 \in \mathbb R^n$ such that the following conditions hold: 
 	\begin{equation}\label{eq:15}
 	\begin{bmatrix} r_0 &     R   \\ 1 & 0   \end{bmatrix}^T \begin{bmatrix} P_1 -P_2 & b_1-b_2 \\ (b_1-b_2)^T & \tilde e_1 - \tilde e_2 \end{bmatrix} \begin{bmatrix} r_0 &     R     \\ 1 & 0   \end{bmatrix}  = 0,
 	\end{equation} 
 	\begin{equation} \label{eq:16}
 	\begin{bmatrix} r_0 &  -c&   R  \\ 1 & 0& 0 \end{bmatrix}^T  
 	\begin{bmatrix}P_1 & b_1 \\ b_1^T & \tilde e_1 \end{bmatrix} 
 	\begin{bmatrix} r_0 &  -c&  R \\ 1 & 0& 0 \end{bmatrix}  \overset{\mathbb R_+^{2n}}{>} 0,
 	\end{equation}  
 	\begin{equation}\label{eq:17} 
 	\begin{bmatrix} r_0 &  c&    R  \\ 1 & 0& 0   \end{bmatrix}^T  \begin{bmatrix}
 	P_2 & b_2 \\ b_2^T & \tilde e_2 
 	\end{bmatrix} \begin{bmatrix} r_0 &  c&    R   \\ 1 & 0& 0   \end{bmatrix}  \overset{\mathbb R_+^{2n}}{>} 0,
 	\end{equation}   
 	\begin{multline}\label{eq:18}
 	\mathcal M_1 \in -\mathrm{cl} \left \{ Z    \ \vline  \ \begin{matrix} Z - \begin{bmatrix} 0 & 0 & -c\sigma \\ 0& -\gamma R_w & 0  \\ -\sigma c^T & 0& \gamma -2f\sigma   \end{bmatrix}   \geqslant 0  \\ \text{ for some }  \gamma, \sigma \in \mathbb R_+ \end{matrix}  \right \}
 	\end{multline}
 	and 
 	\begin{multline}\label{eq:19}
 	\mathcal M_2  \in - \mathrm{cl} \left \{ Z    \ \vline  \ \begin{matrix} Z - \begin{bmatrix} 0 & 0 & c\sigma \\ 0& -\gamma R_w & 0  \\ \sigma c^T & 0& \gamma  +2f\sigma  \end{bmatrix}   \geqslant 0  \\ \text{ for some }  \gamma,\sigma \in \mathbb R_+ \end{matrix}  \right \}
 	\end{multline}
 	where $R = \begin{bmatrix} \hat R & -\hat R \end{bmatrix}$ and
 	$$
 	\mathcal M_i = \begin{bmatrix} \mathrm{He}(A_i^T P_i) + \alpha P_i & * & * \\ B^T P_i & -\alpha R_w  & * \\ d_i^T P_i+ b_i^T A_i +\alpha b_i^T & b_i^T B & \alpha \tilde e_i +2b_i^T d_i  \end{bmatrix},  
 	$$ for $i=1,2$.  Then, the piecewise quadratic Lyapunov function 
 	\begin{equation} \label{Lya2}
 	V(x) = \begin{cases}
 	x^T P_1 x +2b_1^T x + \tilde e_1&,\text{if } c^T x +f < 0 \\ x^T P_2 x + 2b_2^T x + \tilde e_2&,\text{if } c^T x + f \geqslant 0 
 	\end{cases} 
 	\end{equation} 
 	 is continuous, strictly positive, radially unbounded and satisfies the inequality \eqref{ine-1}. As a result, the reachable set of system \eqref{BH2.1} is contained in the piecewise ellipsoidal set 
 	\begin{multline*} 
 	\Gamma_1 = \left \{ x \in \mathbb R^n \ \vline \ \begin{matrix} c^T x +f \leqslant 0\\  0 \leqslant \begin{bmatrix} x\\ 1\end{bmatrix}^T \begin{bmatrix}P_1 & b_1 \\ b_1^T &\tilde e_1 \end{bmatrix} \begin{bmatrix} x\\ 1\end{bmatrix}  \leqslant 1 \end{matrix} \right \} 
 	\\ \bigcup \left \{ x \in \mathbb R^n \ \vline \ \begin{matrix} c^T x +f \geqslant 0\\  0 \leqslant \begin{bmatrix} x\\ 1\end{bmatrix}^T \begin{bmatrix}P_2 & b_2 \\ b_2^T &\tilde e_2 \end{bmatrix} \begin{bmatrix} x\\ 1\end{bmatrix}  \leqslant 1 \end{matrix} \right \}.
 	\end{multline*} 
 \end{theorem}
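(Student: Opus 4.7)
The plan is to verify that the piecewise quadratic function $V$ in \eqref{Lya2} satisfies the hypotheses of Lemma~\ref{lm:mai}, so that the reachable set is contained in $\{V\leq 1\}$, and then to identify this level set with $\Gamma_1$. The continuity, positivity, and radial unboundedness of $V$ are direct applications of the technical lemmas. Condition \eqref{eq:15} is the hypothesis of Lemma~\ref{lm-3.4} and yields $V_1(x)=V_2(x)$ for all $x\in\Sigma_0$, so $V$ is continuous. Conditions \eqref{eq:16} and \eqref{eq:17} are the hypotheses of Lemma~\ref{lm-3.3}, the first applied with $(-c,-f)$ in place of $(c,f)$---which leaves $r_0$ and the image of $\hat R$ unchanged and simply reparametrizes $\Sigma_0\cup\Sigma_-$ as $\{r_0+\mu(-c)+\hat R\theta_1-\hat R\theta_2:\mu,\theta_1,\theta_2\geq 0\}$---so together they give $V_1(x)>0$ on $(\Sigma_0\cup\Sigma_-)\setminus\{0\}$ and $V_2(x)>0$ on $(\Sigma_0\cup\Sigma_+)\setminus\{0\}$. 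Combined with the choice of $\tilde e_i$ in \eqref{eq:lmk}, which forces $V(0)=0$ regardless of which mode contains the origin, this gives $V(x)>0$ for all $x\neq 0$; radial unboundedness follows from $P_1,P_2$ being positive definite, since the quadratic terms dominate as $\|x\|\to\infty$.

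The central step is to establish \eqref{ine-1}. Differentiating $V$ along a Carathéodory solution $x(t)$ with $\dot x=A_1x+Bw+d_1$ on $\Sigma_-$, direct rearrangement yields
\begin{equation*}
\dot V(x(t))+\alpha V(x(t))-\alpha w^T R_w w=\begin{bmatrix}x\\w\\1\end{bmatrix}^T\mathcal M_1\begin{bmatrix}x\\w\\1\end{bmatrix},
\end{equation*}
with $\mathcal M_1$ exactly as defined in the theorem, and the analogous identity with $\mathcal M_2$ holds on $\Sigma_+$. The desired inequality is precisely the conclusion of Lemma~\ref{lm-3.6} applied to $-\mathcal M_i$ on the respective region intersected with $\mathcal W$. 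On $\Sigma_0\cup\Sigma_+$ the lemma applies directly and reproduces \eqref{eq:19}; on $\Sigma_0\cup\Sigma_-=\{-c^Tx-f\geq 0\}$ it is applied with $(c,f)\mapsto(-c,-f)$, which turns the off-diagonal block $c\sigma$ into $-c\sigma$ and the bottom-right entry $\gamma+2f\sigma$ into $\gamma-2f\sigma$, reproducing \eqref{eq:18}. On the interface $\Sigma_0$ the two branches of the dynamics and of $V$ coincide by \eqref{c-3}--\eqref{c-4} and by the continuity argument above, so $V\circ x$ is absolutely continuous on $\mathbb R_+$ and \eqref{ine-1} holds almost everywhere.

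Finally, Lemma~\ref{lm:mai} yields $\mathcal R\subseteq\{V\leq 1\}$; partitioning this level set according to the sign of $c^Tx+f$ and appending the (redundant, by strict positivity) lower bound $V_i(x)\geq 0$ in each piece produces exactly $\Gamma_1$. The main obstacle I anticipate is the careful bookkeeping of sign conventions when translating the ``canonical'' half space $\{c^Tx+f\geq 0\}$, for which Lemmas~\ref{lm-3.3} and \ref{lm-3.6} are stated, into the opposite half space $\Sigma_0\cup\Sigma_-$ appearing in \eqref{eq:16} and \eqref{eq:18}. A secondary concern is ensuring $V\circ x$ is differentiable enough to support \eqref{ine-1}: this is handled by the continuity hypothesis \eqref{c-3}, which makes the right-hand side of \eqref{BH2.1} globally Lipschitz, rules out sliding modes on $\Sigma_0$, and guarantees a unique absolutely continuous solution whose sojourn time in $\Sigma_0$ has Lebesgue measure zero.
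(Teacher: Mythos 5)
Your proposal is correct and follows essentially the same route as the paper's own proof: continuity from \eqref{eq:15} via Lemma~\ref{lm-3.4}, positivity and radial unboundedness from \eqref{eq:16}--\eqref{eq:17} via Lemma~\ref{lm-3.3} together with the choice of $\tilde e_i$, the identity $\dot V+\alpha V-\alpha w^TR_ww=[x^T\ w^T\ 1]\mathcal M_i[x^T\ w^T\ 1]^T$ on each mode, negativity via Lemma~\ref{lm-3.6} applied with the sign-flipped half space for mode $1$, and the conclusion from Lemma~\ref{lm:mai}. Your extra bookkeeping of the $(c,f)\mapsto(-c,-f)$ substitution and the remark on differentiability of $V\circ x$ go slightly beyond what the paper writes out, but they do not change the argument.
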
 
\begin{proof} Consider the Lyapunov function $V(x)$ defined by \eqref{Lya2}. First, due to \eqref{eq:15} and Lemma~\ref{lm-3.4}, it can be seen that $V$ is continuous on $\mathbb R^n$. Moreover, due to \eqref{eq:16},\eqref{eq:17}, Lemma~\ref{lm-3.3} and the determination of $\tilde e_1, \tilde e_2$ as in \eqref{eq:lmk}, we have $V(0) =0$, $V(x)>0$ for all $x\in \mathbb R^n, x \ne 0$ and $V(x)$ is radially unbounded. 
	
Next, we prove that $V$ satisfies inequality \eqref{ine-1}. Let $x(t)=x^w(t;0)$ be the trajectory of system \eqref{BH2.1} with a locally integrable disturbance $w(t)$. Then, for almost everywhere $t \in \mathbb R_+,$   $\dot x(t)$ exists satisfying
$$
\dot x(t) = A_i x(t) + B w(t) +d_i \text{ and } (-1)^{i+1} (c^T x(t) +f) \leqslant 0
$$
for some $i \in \{1,2\}$. We consider two possible cases: 

The first case is that $i=1$. In this case, we have $-c^T x(t) -f \geqslant 0$ and 
\begin{align*}
& \dot V(x(t)) +\alpha V(x(t)) - \alpha   w^T(t)R_w w(t)
 \\&  
 = \begin{bmatrix} x(t) \\ w(t) \end{bmatrix}^T  \begin{bmatrix} \mathrm{He}(A_1^T P_1) + \alpha P_1 & P_1B \\ B^TP_1 & -\alpha R_w  \end{bmatrix} \begin{bmatrix} x(t) \\ w(t) \end{bmatrix} 
 \\& 
 + 2\begin{bmatrix} d_1^T P_1+ b_1^T A_1 +\alpha b_1^T & b_1^T B \end{bmatrix}  \begin{bmatrix} x(t) \\ w(t) \end{bmatrix} 
+ \alpha \tilde e_1 + 2b_1^T d_1 
\\& 
=  \begin{bmatrix} x(t) \\ w(t) \\ 1 \end{bmatrix}^T \mathcal M_1 \begin{bmatrix} x(t) \\ w(t) \\ 1 \end{bmatrix}.
\end{align*}
Together with \eqref{eq:18} and Lemma~\ref{lm-3.6}, this presentation yields  
\begin{equation*}
\dot V(x(t)) +\alpha V(x(t)) - \alpha  w^T(t)R_w w(t) \leqslant 0.
\end{equation*} 
For the second case where $i=2$, we similarly have 
\begin{align*}
& \dot V(x(t)) +\alpha V(x(t)) - \alpha w^T(t) R_w w(t) 
\\&  
= \begin{bmatrix} x(t) \\ w(t) \\ 1 \end{bmatrix}^T  \mathcal M_2 \begin{bmatrix} x(t) \\ w(t) \\ 1 \end{bmatrix}.
\end{align*}
Due to \eqref{eq:19} and Lemma~\ref{lm-3.6}, we get   
\begin{equation*}
\dot V(x(t)) +\alpha V(x(t)) - \alpha  w^T(t)R_w w(t) \leqslant 0
\end{equation*} 
if the second case occurs.  In summary, for both cases, $V(x(t))$ satisfies inequality \eqref{ine-1}. The remain conclusion follows from Lemma~\ref{lm:mai} and the definition of $V(x)$ as \eqref{Lya2}.  
\end{proof}
Interestingly, conditions \eqref{eq:18} and \eqref{eq:19} can be greatly simplified. Note that if there exist $\gamma \in \mathbb R_+,\sigma \in \mathbb R_+$ such that 
\begin{multline} 
\begin{bmatrix} \mathrm{He}(A_1^T P_1) + \alpha P_1 & * & * \\ B^T P_1 & -\alpha R_w  & * \\ d_1^T P_1 + b_1^T A_1 +\alpha b_1^T & b_1^T B & \alpha \tilde e_1 +2b_1^T d_1  \end{bmatrix} \\ +  \begin{bmatrix} 0 & * & * \\ 0& -\gamma R_w & *  \\ -\sigma c^T & 0& \gamma -2f\sigma   \end{bmatrix}   \leqslant 0,
\end{multline}
then 
$$ 
\begin{bmatrix} \mathrm{He}(A_1^T P_1) + \alpha P_1 & * & * \\ B^T P_1 & -\alpha R_w  & * \\ d_1^T P_1 + b_1^T A_1 +\alpha b_1^T & b_1^T B & \alpha \tilde  e_1 +2b_1^T d_1 \end{bmatrix}
$$
belongs to the right-hand side of inclusion \eqref{eq:18}. It is similar manner for \eqref{eq:19}. Therefore, we get the following corollary from Theorem~\ref{mthrm-2}.
\begin{corollary}\label{cor:3.8} Consider the bimodal PWA system \eqref{BH2.1} subject to condition \eqref{con:3}.  Suppose that there exist positive definite symmetric matrices $P_1, P_2 \in \mathbb R^{n\times n}$, vectors $b_1, b_2 \in \mathbb R^n$ and scalars $\alpha >0, \gamma_1, \gamma_2,\sigma_1, \sigma_2 \in \mathbb R_+ $ and $\tilde e_1, \tilde e_2$ as in \eqref{eq:lmk} such that \eqref{eq:15}, \eqref{eq:16}, \eqref{eq:17} are satisfied and 
	\begin{equation} \label{eq:mx1}
	\begin{bmatrix}
	\mathrm{He}(A_1^T P_1) +\alpha P_1 & \ast  &   \ast   \\
	B^T P_1 & -(\gamma_1 +\alpha)R_w  & \ast \\ (P_1d_1+A_1^Tb_1+\alpha b_1-\sigma_1 c)^T & b_1^T B & \Delta_1 
	\end{bmatrix} \leqslant 0,  
	\end{equation}
	\begin{equation} \label{eq:mx2}
	\begin{bmatrix}
	\mathrm{He}(A_2^T P_2) +\alpha P_2 & \ast &   \ast  \\
	B^T P_2 &-(\gamma_2  +\alpha)R_w  & \ast \\ (P_2d_2+A_2^Tb_2+\alpha b_2+\sigma_2 c)^T & b_2^T B &\Delta_2
	\end{bmatrix} \leqslant 0,  
	\end{equation}
	where $\Delta_i =\alpha \tilde e_i +  2b_i^T d_i + \gamma_i +2(-1)^i f \sigma_i$. Then, the reachable set of system \eqref{BH2.1} is contained in the piecewise ellipsoidal set
	\begin{multline*} 
	\Gamma_2 = \left \{ x \in \mathbb R^n \ \vline \ \begin{matrix} c^T x +f \leqslant 0\\  0 \leqslant \begin{bmatrix} x\\ 1\end{bmatrix}^T \begin{bmatrix}P_1 & b_1 \\ b_1^T & \tilde e_1 \end{bmatrix} \begin{bmatrix} x\\ 1\end{bmatrix}  \leqslant 1 \end{matrix} \right \} 
	\\ \bigcup \left \{ x \in \mathbb R^n \ \vline \ \begin{matrix} c^T x +f \geqslant 0\\  0 \leqslant \begin{bmatrix} x\\ 1\end{bmatrix}^T \begin{bmatrix}P_2 & b_2 \\ b_2^T &\tilde e_2 \end{bmatrix} \begin{bmatrix} x\\ 1\end{bmatrix}  \leqslant 1 \end{matrix} \right \}.
	\end{multline*}
\end{corollary}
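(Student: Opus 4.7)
The plan is to derive Corollary~\ref{cor:3.8} as a direct specialization of Theorem~\ref{mthrm-2}. Since the hypotheses \eqref{eq:15}, \eqref{eq:16}, \eqref{eq:17} appear verbatim in both statements, the only real work is to verify that the two finite-dimensional LMIs \eqref{eq:mx1} and \eqref{eq:mx2} furnish explicit certificates for the set-membership conditions \eqref{eq:18} and \eqref{eq:19}.

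First I would recall that \eqref{eq:18} is implied by the existence of $\gamma,\sigma \in \mathbb R_+$ satisfying
\[
\mathcal M_1 + \begin{bmatrix} 0 & 0 & -c\sigma \\ 0 & -\gamma R_w & 0 \\ -\sigma c^T & 0 & \gamma - 2f\sigma \end{bmatrix} \leqslant 0,
\]
because the negative of the displayed left-hand side then lies in the inner (un-closed) set defining the right-hand side of \eqref{eq:18}. Choosing $\gamma = \gamma_1$ and $\sigma = \sigma_1$ as provided by the corollary and expanding $\mathcal M_1$ block by block (using $P_1^T = P_1$ so that $d_1^T P_1 + b_1^T A_1 + \alpha b_1^T - \sigma_1 c^T$ coincides with $(P_1 d_1 + A_1^T b_1 + \alpha b_1 - \sigma_1 c)^T$), the above inequality collapses exactly to \eqref{eq:mx1}: the $(2,2)$ block becomes $-(\alpha+\gamma_1) R_w$, and the $(3,3)$ entry becomes $\alpha \tilde e_1 + 2 b_1^T d_1 + \gamma_1 - 2f\sigma_1 = \Delta_1$, matching the sign $(-1)^1$ built into the definition of $\Delta_i$.

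An identical computation for $i=2$, now using the auxiliary matrix from \eqref{eq:19} whose $(1,3)$ block is $+c\sigma$ and whose $(3,3)$ entry carries $+2f\sigma$, shows that \eqref{eq:mx2} witnesses \eqref{eq:19}; the flipped signs are precisely what the convention $(-1)^2 = +1$ in $\Delta_2$ requires. Having verified every hypothesis of Theorem~\ref{mthrm-2}, I would invoke it directly: the piecewise quadratic function \eqref{Lya2} is then continuous, strictly positive and radially unbounded and satisfies \eqref{ine-1}, so Lemma~\ref{lm:mai} places the reachable set inside $\Gamma_1$, which is literally the set $\Gamma_2$ appearing in the corollary.

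No genuine obstacle is anticipated; the argument is purely algebraic bookkeeping of block matrices. The only points demanding vigilance are the transpose convention on the $(1,3)$ block (so that the signs of $\sigma_i c$ in \eqref{eq:mx1}--\eqref{eq:mx2} align with those in \eqref{eq:18}--\eqref{eq:19}) and tracking the $(-1)^i$ factor separating the two half-spaces. Note also that the explicit witnesses produced here already lie in the inner set, so the closure operation in \eqref{eq:18}--\eqref{eq:19} is never exercised; this observation incidentally indicates in what sense the corollary is a possibly conservative specialization of the theorem.
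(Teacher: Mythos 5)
Your proposal is correct and follows essentially the same route as the paper: the authors likewise observe that \eqref{eq:mx1} (resp. \eqref{eq:mx2}) exhibits $\gamma_1,\sigma_1$ (resp. $\gamma_2,\sigma_2$) as explicit witnesses placing $-\mathcal M_1$ (resp. $-\mathcal M_2$) in the inner set of \eqref{eq:18} (resp. \eqref{eq:19}), so that Theorem~\ref{mthrm-2} applies directly. Your sign bookkeeping for the $(1,3)$ block and the $(-1)^i$ factor in $\Delta_i$ is accurate, and your closing observation that the closure is never exercised matches the spirit of the paper's remark that the corollary is a (possibly conservative) simplification.
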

\begin{remark} \rm As common in ellipsoidal bounding methods, maximizing the trace of $P_1$ and $P_2$ subject to the proposed LMIs can be done, so as to make the estimated reachable set as small as possible. Note that, technically speaking, conditions \eqref{eq:18}, \eqref{eq:19}, \eqref{eq:mx1}, \eqref{eq:mx2} are not exactly LMIs due to the joint presence of $\alpha$ and $P_1, P_2$ in some block entries. However, this is a common feature to any ellipsoidal bounding method \cite{ Baldi18,Boyd94,Nam11}, where both the decreasing rate $\alpha$ and the Lyapunov matrices must be solved. This stems from condition \eqref{ine-1}, where $\alpha$ multiplies the Lyapunov function. A common method to solve this problem is to combine the LMI conditions with a one-dimensional search method over $\alpha$. 
\end{remark}

Some final remarks are made to clarify the contributions of the proposed method.

\begin{remark} \rm The main innovation of this work as compared to ellipsoidal bounding literature lies in the sequence of Lemmas~\ref{lm-3.3}, \ref{lm-3.4}, \ref{lm-3.6}, \textcolor{black}{designed in such a way to exploit the bimodal dynamics and} eventually leading to Theorem~\ref{mthrm-2}. Such lemmas have been introduced to handle positiveness of piecewise quadratic functions (Lemma~\ref{lm-3.3}) while exploiting the continuity properties of the system dynamics (Lemma~\ref{lm-3.4}) and its state-dependent switching (Lemma~\ref{lm-3.6}).
\end{remark}
\textcolor{black}{The previous remark allows us to elaborate on the difference between of the proposed approach and other LMI-based conditions for estimation of reachable set, \textcolor{black}{especially \cite{NAKADA2004383}}.} 

\begin{remark}\label{rmk3.11} \rm \textcolor{black}{In \cite{NAKADA2004383}, the Lyapunov functions are assumed in a certain decomposition form, {\em e.g.} $P_i = F_i^T T F_i$, where $T$ is common to all partitions and $F_i$ are selected \emph{a priori} to  satisfy  some certain continuity conditions. However, conditions provided in \cite[Theorem 1]{NAKADA2004383}  do not apply to bimodal piecewise affine system with possibly non-vanishing input as in \eqref{con:3}. In addition, if one employs the piecewise  Lyapunov function as $P_i = F_i^T T F_i$, the result on $T$ totally depends on the choice of matrices $F_1, F_2, f_1, f_2$ such that equality \cite[eq.(4)]{NAKADA2004383} holds. If we inappropriately choose these matrices, that may lead to conservative results. For example, for bimodal piecewise linear system
	$$
	\dot x(t) = \begin{cases}
	A_1 x(t) + B w(t) \text{ if } c^T x(t) \leqslant 0 \\ A_2 x(t) + B w(t) \text{ if } c^T x(t) \geqslant 0,
	\end{cases}
	$$
	if we take $F_1 = F_2 = c^T, f_1 = f_2 =0$ then \cite[eq.(4)]{NAKADA2004383} holds, but the corresponding piecewise Lyapunov function will boil down to a common Lyapunov function. So far, there are no results on how to select these matrices in some optimal/good way and derive common matrices such that the level set is bounded and connected. The conditions proposed in our approach avoid this special decomposition by carefully exploiting the continuity properties of the system dynamics and its state-dependent switching in Lemmas~\ref{lm-3.4} and \ref{lm-3.6}. } 
\end{remark}

\textcolor{black}{In view of the previous consideration on a common quadratic Lyapunov function as a special case of the decomposition approach, the} following remark can be stated. 

\begin{remark}\label{rm3.12} \rm  Suppose that there exist a positive scalar $\alpha>0$ and a positive definite  symmetric matrix $P>0$ such that 
	\begin{subequations} \label{eq:24}
		\begin{align}
		\begin{bmatrix} \mathrm{He}(A_1^T P) + \alpha P & PB \\ B^TP & -\alpha R_w   \end{bmatrix} & < 0, \\
		\begin{bmatrix} \mathrm{He}(A_2^T P) + \alpha P & PB \\ B^TP & -\alpha R_w   \end{bmatrix} & < 0.
		\end{align} 
	\end{subequations} 
	Then, the reachable set of system \eqref{BH2.1} is contained in the ellipsoidal set 
	$$ 
	\Gamma := \left \{ x \in \mathbb R^n \ | \ x^T P x  \leqslant 1 \right \}.
	$$ 
\end{remark}

\section{Numerical examples} \label{sec:4} 

\textcolor{black}{Because fair numerical comparisons with existing approaches based on piecewise quadratic Lyapunov functions are not possible due to quite different settings ({\em e.g.} time-dependent switching in \cite{8407489}, vanishing disturbances in \cite{NAKADA2004383}, {\em etc}), we will consider a common Lyapunov function \textcolor{black}{of Remark~\ref{rm3.12}} as a main mean of comparison.}
\begin{example}\rm Consider the bimodal piecewise affine system given as \eqref{BH2.1}, 
where $d_1 =d_2 =0, f=0$, $c^T = \begin{bmatrix} 0.5&0\end{bmatrix}$, $B_1^T = B_2^T = \begin{bmatrix} -0.5& 1 \end{bmatrix}$ and 
$$ A_1 = \begin{bmatrix} -0.5& -0.4 \\ 3 & -0.5 \end{bmatrix}, \ A_2 = \begin{bmatrix} -3& -0.4 \\ -0.5 & -0.5\end{bmatrix}. $$
This is a PWA system in which the phase planes of the two subsystems are almost "orthogonal" to each other, as we will see in a while. The disturbance $w$ is constrained in the interval $[-1,1]$, {\em i.e.} $R_w =1$.  In this example, we employ both quadratic and piecewise quadratic Lyapunov functions for estimating its reachable set. By maximizing the traces of involved Lyapunov matrices and taking $\alpha =0.4$, the results are shown in Figure 2. In this figure, the  green solid line presents reachable set ellipsoidal bounding when common quadratic Lyapunov function is employed; the combined red- and blue solid lines present reachable set piecewise ellipsoidal bounding when piecewise quadratic Lyapunov function is employed. Inside the estimated sets we can see the result of $1000$ trajectories obtained with $1000$ realizations of a random disturbance $w$ uniformly distributed in $[-1,1]$. It can be observed that using piecewise quadratic Lyapunov function gives a better approximation to the reachable set than the common quadratic Lyapunov function.  In particular, the piecewise quadratic Lyapunov function is able to capture the "orthogonal" nature of the phase planes. \textcolor{black}{The computational time of the common Lyapunov function is 0.0523\textcolor{black}{s} (yalmiptime) + 0.0527\textcolor{black}{s} (solvertime), whereas the computational time of the piecewise Lyapunov function is 0.0461\textcolor{black}{s} (yalmiptime) + 0.0549\textcolor{black}{s} (solvertime). All simulations are performed in Matlab R2019b with Yalmip/Sedumi, on a laptop with Intel Core i7-8750H, RAM 32.0 GB.}
\end{example}
\begin{figure}
\centering
	\includegraphics[scale=0.6]{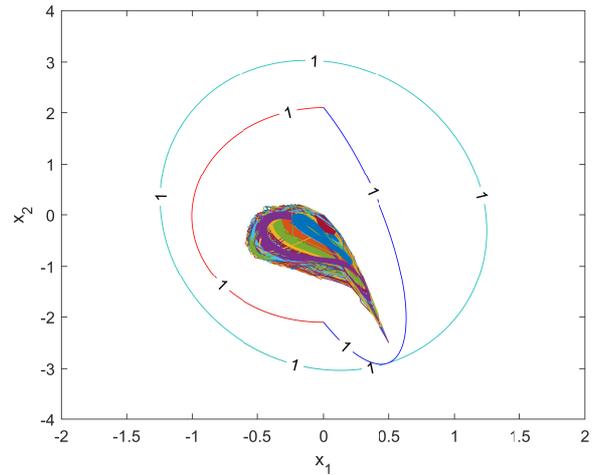}
	\caption{Estimates of the reachable set: common quadratic Lyapunov function (cyan solid line) vs. piecewise quadratic Lyapunov function (red-blue solid line). Inside the
		estimated sets we can see the result of 1000 trajectories
		obtained with 1000 realizations of a random bounded disturbance.}
\end{figure}

\begin{example} \rm \textcolor{black}{We consider the mechanical system shown in Figure~3, which is a popular benchmark in the bimodal systems community \cite{Kanat08}. 	
	The state comprises $x_1$ and $x_2$, the displacements of the left and right cart from the tip of the leftmost spring, and $x_3$ and $x_4$, corresponding velocities. Let us denote the masses of the carts by $m_1$ and $m_2$, the spring constants by $k'$ and $k$, and the damping constant by $d$. The equations of motion can be derived as 
\begin{equation}\label{lb23}
\dot x(t) = \begin{cases}
A_1 x(t) + B(F(t)+w(t)) \text{ if } c^T x(t) \leqslant 0 \\ A_2 x(t) + B(F(t)+w(t)) \text{ if } c^T x(t) \geqslant 0
\end{cases}
\end{equation}
where $x^T= \begin{bmatrix} x_1&x_2&x_3&x_4\end{bmatrix}$, $F$ is the force applied to the right cart, $w$ is a bounded disturbance, and $c^T = \begin{bmatrix} 1&0&0&0 \end{bmatrix},$ $B^T = \begin{bmatrix} 0&0&0&1\end{bmatrix},$ 
\begin{align*}
A_1 &= \begin{bmatrix} 0&0&1&0\\ 0&0&0&1\\ \dfrac{-(k+k')}{m_1}& \dfrac{k}{m_1}&\dfrac{-d}{m_1}&\dfrac{d}{m_1} \\ \dfrac{-k}{m_2}& \dfrac{k}{m_2}& \dfrac{-d}{m_2}&\dfrac{d}{m_2}\end{bmatrix}, \\ 
A_2 &= \begin{bmatrix} 0&0&1&0\\ 0&0&0&1\\ \dfrac{-k}{m_1}& \dfrac{k}{m_1}&\dfrac{-d}{m_1}&\dfrac{d}{m_1} \\ \dfrac{-k}{m_2}& \dfrac{k}{m_2}& \dfrac{-d}{m_2}&\dfrac{d}{m_2}\end{bmatrix}.
\end{align*}
Note that the two conditions in \eqref{lb23} represent the impact with the spring. In order to obtain a bounded reachability set, we add a stabilizing controller 
\begin{equation}
F(t) = -K^Tx(t) 
\end{equation}
where $K \in \mathbb R^4$ is a proportional-derivative control gain. The experiments are performed with
$
K^T= \begin{bmatrix}10&10&1&1 \end{bmatrix} 
$ and 
$
m_1 = 1, m_2 = 2, k = 0.1, k' = 0.1, d = 0.1, R_w =1, \alpha = 0.1.
$
\begin{figure}\label{fig3}
	\centering
		\includegraphics[scale=0.6]{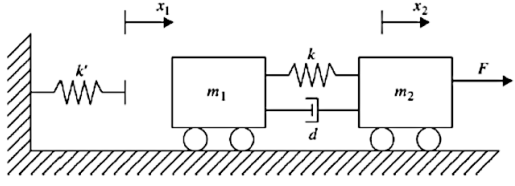}
		\caption{Linear mechanical system with a one-sided spring.}
	\end{figure}
The application of Remark~\ref{rm3.12} results in a common Lyapunov function with positive definite matrix 
$$
P =\begin{bmatrix}
3.0035 & 1.5178 & -0.0157 & 0.0227 \\
1.5178 & 1.5269 & -0.6213 & 0.0018 \\
 -0.0157 & -0.6213 & 6.2020 & 0.2085 \\
 0.0227 & 0.0018 & 0.2085 & 0.1492
 \end{bmatrix}  
$$
whereas the application of Corollary~\ref{cor:3.8} results in a piecewise Lyapunov function with positive definite matrices
$$
P_1 =\begin{bmatrix}
12.3671 &   1.5443 &  -0.0777  &  0.1213 \\
    1.5443  &  1.9862  & -4.8429  &  -0.0356 \\
    -0.0777 &  -4.8429 &  45.0481 &   0.5552 \\
    0.1213  & -0.0356  &  0.5552  &  0.1534 
\end{bmatrix},  
$$
$$
P_2 =\begin{bmatrix}
11.3517  &  1.6612 &  -1.0981  &  0.1041 \\
     1.6612  &  1.9862  & -4.8429  & -0.0356 \\
    -1.0981 &  -4.8429  &  45.0481  &  0.5552 \\
    0.1041  & -0.0356 &   0.5552  &  0.1534 
\end{bmatrix}  
$$
and $b_1,b_2 \approx 0$ (of the order of $10^{-6}$).
Due to the presence of four states, it is not possible to plot the 1-level set of common/piecewise Lyapunov functions. However, \textcolor{black}{we have projected the reachable set in the bidimensional subspace $(x_1,x_3)$, so that the reachanle set can be visualized in the plane as in Fig.~\ref{h1r1}. In addition,} the interested reader can easily verify that $P_1>P$ and $P_2>P$, which implies that the 1-level set of the piecewise Lyapunov function is smaller than the  1-level set of the common Lyapunov function. This further verifies that the proposed approach leads to a smaller estimate of the reachable set. \textcolor{black}{For comparison purposes, we have used the same numerical example via the Matlab-based reachability toolbox COntinuous Reachability Analyzer (CORA), available at https://github.com/TUMcps/CORA). With this software, the reachable set is approximated as a zonotope, and we have tried a zonotope of order 40 (larger dash-dotted line in Fig.~\ref{h1r1}) and a zonotope of order 50 (smaller dash-dotted line in Fig.~\ref{h1r1}). For the zonotope of order 40, we need more than 200s to solve the problem, whereas we need more than 215s for the zonotope of order 50 (using the same computer platform as before). Increasing the order of the zonotope increases the computational time, but does not improve the estimate of the reachable set.} Let us finally report that the computational time of the common Lyapunov function is 0.0547\textcolor{black}{s} (yalmiptime) + 0.0873\textcolor{black}{s} (solvertime), whereas the computational time of the piecewise Lyapunov function is 0.0507\textcolor{black}{s} (yalmiptime) + 0.1913\textcolor{black}{s} (solvertime). \textcolor{black}{Therefore, for a 4th order system, our method solves the problem in 0.1s, against 200s of CORA (3 orders of magnitude difference).}
}
\end{example}
\begin{figure}[h]
	\centering
	\includegraphics[scale=0.6]{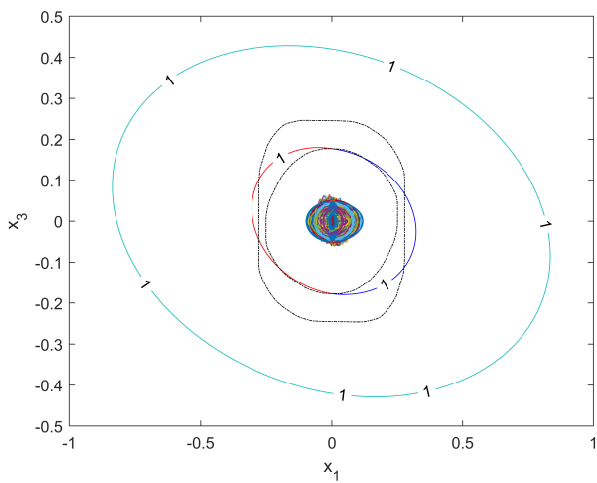}
	\caption{Estimates of the reachability set with common Lyapunov function (cyan solid line, computational time $\approx$ 0.1s); with proposed bimodal method (blue-red solid line, computational time $\approx$ 0.1s); with CORA with 40th-order zonotope (larger black dash-dotted line, computational time $\approx$ 200s); with CORA with 50th-order zonotope (smaller black dash-dotted line, computational time $\approx$ 215s). Increasing the order of the zonotope increases the computational time, but does not improve the result.}\label{h1r1}
\end{figure}
\section{Conclusions}\label{sec:5} 

In this paper, we have proposed a new method, in the framework of ellipsoidal bounding, to estimate the reachable sets of continuous bimodal piecewise affine systems using piecewise quadratic Lyapunov functions. Due to the different nature of the Lyapunov function, we had to derive conditions to ensure that a given quadratic function is positive on half spaces. Then, we exploited the property of bimodal piecewise quadratic functions being continuous on a given hyperplane. Finally, we derived linear matrix inequality characterizations of the estimate of the reachable set. 

A relevant future work is to study the extension of the proposed method for trimodal or multimodal systems, {\em e.g. } by exploiting special representations of system dynamics, such as hinging hyperplanes \cite{Moustakis18, Xu09}. 

\section*{Acknowledgments}

\textcolor{black}{The research of the first two authors were supported by Vingroup Innovation Foundation under Grant VINIF.2019. DA09. The research of the last author was partly supported by Double Innovation Plan under Grant 4207012004, and by Special Funding for Overseas talents under Grant 6207011901.}

\bibliographystyle{plain} 
\bibliography{refs_file_1}

\end{document}